\newcommand{\var}[1]{\ensuremath{\left\llbracket \normalfont \texttt{#1}\right\rrbracket}}
\newcommand{\seg}{\ensuremath{\textrm{seg}}\xspace}
\newcommand{\pointvisibilitywidth}{point visibility width\xspace}
\newcommand{\Pointvisibilitywidth}{Point visibility width\xspace}
\newcommand{\chordvisibilitywidth}{chord visibility width\xspace}
\newcommand{\Chordvisibilitywidth}{Chord visibility width\xspace}
\newcommand{\iteratedComb}{Iterated Comb\xspace}
\newtheorem{theorem}{Theorem}
\newtheorem{lemma}[theorem]{Lemma}
\newtheorem{claim}{Claim}
\newcommand{\till}[1]{\noindent \textcolor{Bittersweet}{\textbf{Till:}  #1 }    }
\title{Local Complexity of Polygons}
\author[1]{Fabian Klute\thanks{Supported by the Netherlands Organisation for Scientific Research (NWO) under project no. 612.001.651.}}
\author[2]{Meghana M.~Reddy\thanks{Supported by the Swiss National Science Foundation within the collaborative DACH project \emph{Arrangements and Drawings} as SNSF Project 200021E-171681.}\footnote{The second author's full last name consists of two words and is \emph{Mallik Reddy}. However, she consistently refers to herself with the first word of her last name being abbreviated.}}
\author[1]{Tillmann Miltzow\thanks{Supported by the NWO Veni project EAGER.}}
\affil[1]{ETH Z\"{u}rich, Department of Computer Science}
\affil[2]{Utrecht University, Information and Computing Science Department}
\begin{document}
\maketitle

\begin{abstract}
    Many problems in Discrete and Computational Geometry deal with simple polygons or polygonal regions.
    Many algorithms and data-structures perform considerably faster, if the underlying polygonal region has low local complexity.
    One obstacle to make this intuition rigorous, is the lack of a formal definition of local complexity.
    Here, we give two possible definitions and show how they are related in a combinatorial sense.
    We say that a polygon~$P$ has \emph{\pointvisibilitywidth} $w=\var{pvw}$,
    if there is no point $q\in P$ that sees more than~$w$ reflex vertices.
    We say that a polygon~$P$ has \emph{\chordvisibilitywidth} $w=\var{cvw}$,
    if there is no chord $c=\seg(a,b)\subset P$ that sees more than~w reflex vertices.
    We show that 
    \[\var{cvw} \leq \var{pvw}^{O(\var{pvw})},\]
    for any simple polygon.
    Furthermore, we show that there exists a simple polygon with
    \[\var{cvw} \geq 2^{\Omega(\var{pvw})}.\]

\end{abstract}

\section{Introduction}
In Discrete and Computational Geometry we study many problems with respect to the input size~$n$ and other natural parameters.
One famous example is the computation of the convex hull of a set of points in the plane. While $\Theta(n\log n)$ time is worst case possible, this can be improved to $\Theta(n\log h)$, where $h$ is the number of vertices on the convex hull~\cite{kirkpatrick1986ultimate}.
Here, the number of vertices on the convex hull is a natural parameter to study this problem.
We also say sometimes that the algorithm is output-sensitive.
Another famous example, is the spread~$\Delta$ of a set of points in the plane. That is the ratio between the largest and the smallest distance, between any two points. 
Efrat and Har-Peled were the first to find an approximation algorithm for the art gallery problem under the assumption that the underlying set of vertices has bounded spread~\cite{EFRAT2006238}.
A third example is the number of reflex vertices of a polygon. 
This parameter gave raise to some FPT algorithms for the art gallery problem~\cite{agrawal2020parameterized}.

In this work, we introduce two new parameters that are meant to capture rigorously the idea of local complexity. 
Consider the polygons shown in Figure~\ref{fig:localComplexity},
most researchers would probably agree that the polygon on the left 
has lower local complexity than
the polygon on the right. 
Yet, it is not straightforward how to 
define this rigorously in a mathematical sense.

\begin{figure}
    \centering
    \includegraphics{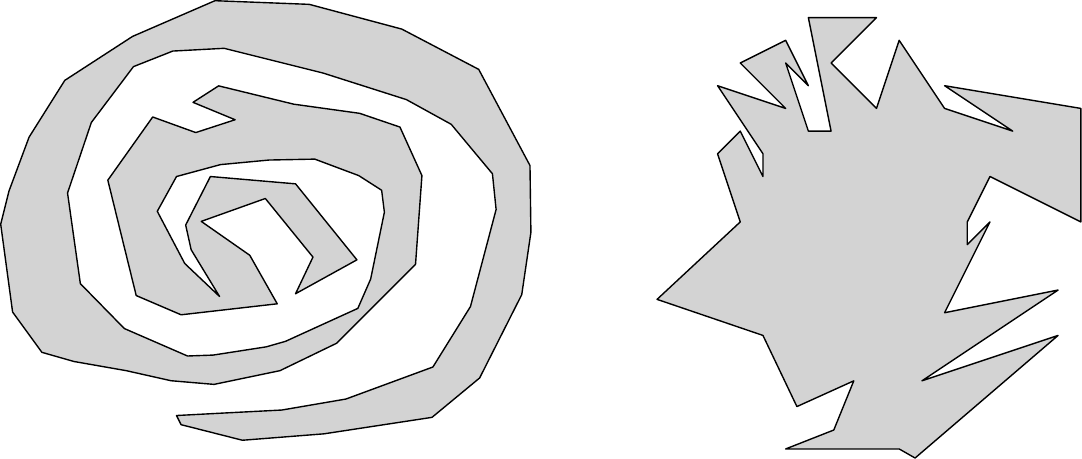}
    \caption{The polygon on the left has intuitively lower local complexity than on the right.}
    \label{fig:localComplexity}
\end{figure}

 Here, we give two possible definitions and show how they are related in a combinatorial sense.
    We say that a polygon~$P$ has \emph{\pointvisibilitywidth} $w=\var{pvw}$,
    if $w$ is the smallest number such that there is no point $q\in P$ that sees more than~$w$ reflex vertices.
    We say that a polygon~$P$ has \emph{\chordvisibilitywidth} $w=\var{cvw}$,
    if $w$ is the smallest number such that there is no chord $c=\seg(a,b)\subset P$ that sees more than~$w$ reflex vertices.

    We show the following theorem.
    
    \begin{theorem} 
    \label{thm:main}
    For every polygon with \chordvisibilitywidth~\var{cvw} and \pointvisibilitywidth ~\var{pvw}, it holds that
     \[\var{pvw}\leq \var{cvw} \leq \var{pvw}^{O(\var{pvw})}.\]
     Moreover, there are polygons such that 
     \[ \var{cvw} \geq 2^{\Omega(\var{pvw})}.\] 
    \end{theorem}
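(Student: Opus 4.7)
The left inequality $\var{pvw} \leq \var{cvw}$ is immediate: a point $q \in P$ can be regarded as the degenerate chord $\seg(q,q)$, whose visible reflex vertices are exactly those seen from $q$.

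For the upper bound $\var{cvw} \leq \var{pvw}^{O(\var{pvw})}$, let $c$ be a chord realising $\var{cvw}$, and set $N := \var{cvw}$, $w := \var{pvw}$. Fix any $p \in c$ and consider the shortest-path tree $T$ in $P$ rooted at $p$. Every bend of a geodesic from $p$ is a reflex vertex of $P$, so the internal nodes of $T$ are reflex vertices and every reflex vertex of $P$ appears in $T$. The key branching bound is that the children of any node $r' \in T$ are contained in the set of reflex vertices visible from $r'$, so there are at most $w$ of them. Hence $T$ contains $O(w^D)$ reflex vertices at depth at most $D$, and the plan is to prove that every reflex vertex visible from $c$ lies at depth $O(w)$ in $T$, which would give $N \leq w^{O(w)}$. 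The main obstacle is exactly this depth bound: if $r$ is visible from $c$ and sits at depth $D$ in $T$, then the geodesic from $p$ to $r$ bends at reflex vertices $r'_1, \ldots, r'_{D-1}$, each of which must itself be visible from some point of $c$ because the geodesic is forced to hug $r'_i$ while $p$ and $r$ both lie close to $c$. I expect the argument to exhibit, whenever $D$ exceeds some linear function of $w$, an explicit point of $P$ that simultaneously sees more than $w$ of the $r'_i$, contradicting $\var{pvw} = w$; producing this witness point is the technical crux.

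For the lower bound $\var{cvw} \geq 2^{\Omega(\var{pvw})}$, I plan to construct an \iteratedComb polygon. The base is a comb with a constant number $m$ of thin inward-pointing teeth: a chord placed across the open mouth sees all $\Theta(m)$ reflex vertices at the tooth bases, while the teeth mutually occlude so that any single interior point sees only $O(1)$ of them. Inductively, each tooth of a level-$k$ comb is replaced by a rotated and scaled copy of a level-$(k{-}1)$ iterated comb, aligned so that the outer mouth-chord extends into every inner comb's mouth-chord. After $k$ levels, $\var{cvw}$ grows multiplicatively by $m$ per level, since the outer chord still witnesses all reflex vertices deep inside every inner copy, whereas $\var{pvw}$ grows only additively by $O(1)$, since a single line of sight can descend into only $O(1)$ nested teeth. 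For any constant $m \geq 2$ this yields $\var{cvw} \geq m^k \geq 2^{\Omega(\var{pvw})}$, with the careful part being the verification that the chord of the outermost level still actually sees every one of the $m^k$ reflex vertices deep at the leaves.
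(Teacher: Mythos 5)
Your treatment of the easy inequality $\var{pvw}\leq\var{cvw}$ is fine, and your lower-bound construction is essentially the paper's \iteratedComb: two spikes per layer already suffice, and the two verifications you defer (that the outer chord sees every reflex vertex, and that a single point sees only $O(1)$ reflex vertices per layer) are exactly what the paper arranges by iteratively stretching the ``bridges'' between consecutive spikes from the innermost layer outward, so that a point seeing into one spike of a layer sees into no other spike of that layer. So for the lower bound you are on the paper's route.

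The upper bound is where the problem lies. Your route --- a shortest-path tree rooted at a single $p\in c$, with branching at most $\var{pvw}$ and a claimed depth bound of $O(\var{pvw})$ for every reflex vertex visible from $c$ --- is genuinely different from the paper's, but the depth bound, which you yourself identify as the technical crux, is left entirely as a conjecture (``I expect the argument to exhibit \dots an explicit point''). Without it the count $w^{O(D)}$ is vacuous, so as written this is a real gap, not a routine omission. The gap is closable: if $r$ is seen from $q\in c$, then the geodesic from $p$ to $r$ is homotopic to the two-link path $\seg(p,q)\cup\seg(q,r)\subset P$, the region enclosed between them has all its bend vertices convex (each bend wraps a reflex vertex with the polygon's exterior on the far side), and one checks that $q$ sees every bend $r'_i$; since each $r'_i$ is a distinct reflex vertex this forces $D-1\leq\var{pvw}$. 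But this funnel argument needs to be made, not announced. The paper avoids geodesics altogether: it assigns to each reflex vertex $r$ seeing the chord $s$ the interval $I(r)\subseteq s$ it sees, shows that if $u$ restricts $I(v)$ then $I(v)\subseteq I(u)$, and turns this into a DAG on the visible reflex vertices with out-degree $2$, in-degree at most $k-1$, and no directed path of more than $k+1$ vertices --- the last because the intervals along a path are nested, so a single point in the innermost interval sees every vertex of the path. There the witness point falls out of the nesting for free, which is precisely the step your version leaves open; layering that DAG then gives $k^{O(k)}$. You should also note two small inaccuracies in your branching step: children of a node in the shortest-path tree need not be reflex vertices (you only want to count the reflex ones), and a reflex vertex sees itself, so it has at most $\var{pvw}-1$ reflex children --- harmless for the asymptotics, but worth stating.
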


    Note that Hengeveld and Miltzow already defined the notion of \chordvisibilitywidth in a very similar way~\cite{SimonARTGALLERY}.
    Specifically, they showed that the art gallery problem admits an FPT algorithm with respect to \chordvisibilitywidth.
    For a parameter to be interesting to study, we usually have three
    criteria.
\begin{itemize}[leftmargin = 2cm, itemsep = 0pt]
    \item[naturalness:] Although there is no definition of what it means to be mathematically natural, many researchers seem to have a common understanding of this notion.
    \item[relevance:] The parameter is at least for some fraction of instances reasonably low.
    \item[profitable:] Using the parameter, we should be able to 
    design better algorithms and prove useful run time upper bounds.
\end{itemize}
    
    We believe that both parameters are mathematically natural.
    Theorem~\ref{thm:main} indicates that the \chordvisibilitywidth can be exponentially larger than the \pointvisibilitywidth.
    Thus we would expect that \chordvisibilitywidth is potentially more profitable. 
    We would expect that both parameters are equally relevant as the example that we give is fairly contrived.
    The remainder of this paper is dedicated to proving Theorem~\ref{thm:main}.

\section{\Chordvisibilitywidth vs \Pointvisibilitywidth}
We prove Theorem~\ref{thm:main} in two parts.
First, we show the second half of the theorem in Section~\ref{sec:lowerbound} 
by constructing a polygon for which it holds that 
$\var{cvw} \geq 2^{\Omega(\var{pvw})}$.
Second, we show the first half of Theorem~\ref{thm:main} in Section~\ref{sec:upperbound}
by analysing how the reflex vertices visible from a chord in a simple polygon $P$ 
restrict each others vision
and relating this to the \pointvisibilitywidth of the polygon.

\subsection{Lower bound}
\label{sec:lowerbound}

\begin{figure}[h]
    \centering
    \includegraphics[page =3]{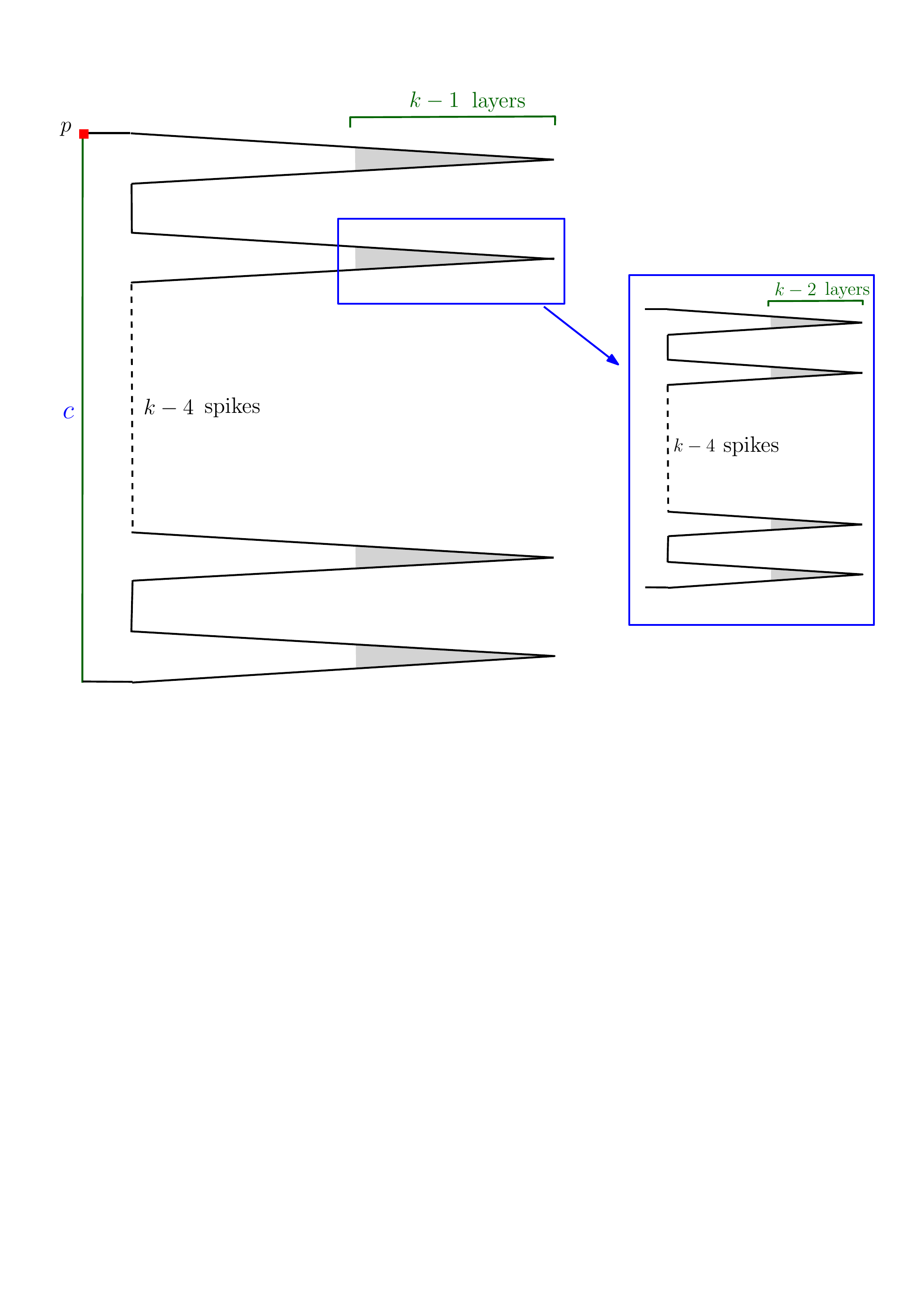}
    \caption{Construction of the \iteratedComb. 
    }
    \label{fig:polygon}
\end{figure}
We construct a
polygon $P$, called the \emph{\iteratedComb}, see Figure~\ref{fig:polygon}.
In the following, let $k \in \mathbb{N}$. The \iteratedComb consists of $k$ \emph{layers}, each layer consists of two \emph{spikes} and each spike further splits into two more spikes in the subsequent layer.
Observe, that the entire polygon is visible from the chord connecting the two left-most points of the polygon. The distance between consecutive spikes in a layer, referred to as the \emph{bridge}, is adjusted such that if at least one vertex in the interior of a spike is visible from a point $p$ on $c$, then $p$ cannot see any interior vertex of any other spike. This property is achieved by stretching the bridges vertically. More specifically, for $ 1 < i \leq k$, the length of the bridge of the $i^{th}$ layer is increased such that the property holds for layer $i$ and then the bridge of the previous layer is adjusted accordingly. By iteratively stretching the bridges from the last layer to the first layer, it can be ensured that the property holds for every layer.
This property is illustrated in Figure~\ref{fig:distance_bw_spikes} for $k=2$. In the first layer, the point $p$ sees an interior vertex of the first spike and no interior vertex of the second spike. Similarly, in the second layer point $p$ sees an interior vertex of the second spike and no interior vertex of the first spike.

\begin{figure}[h]
    \centering
    \includegraphics[page = 3]{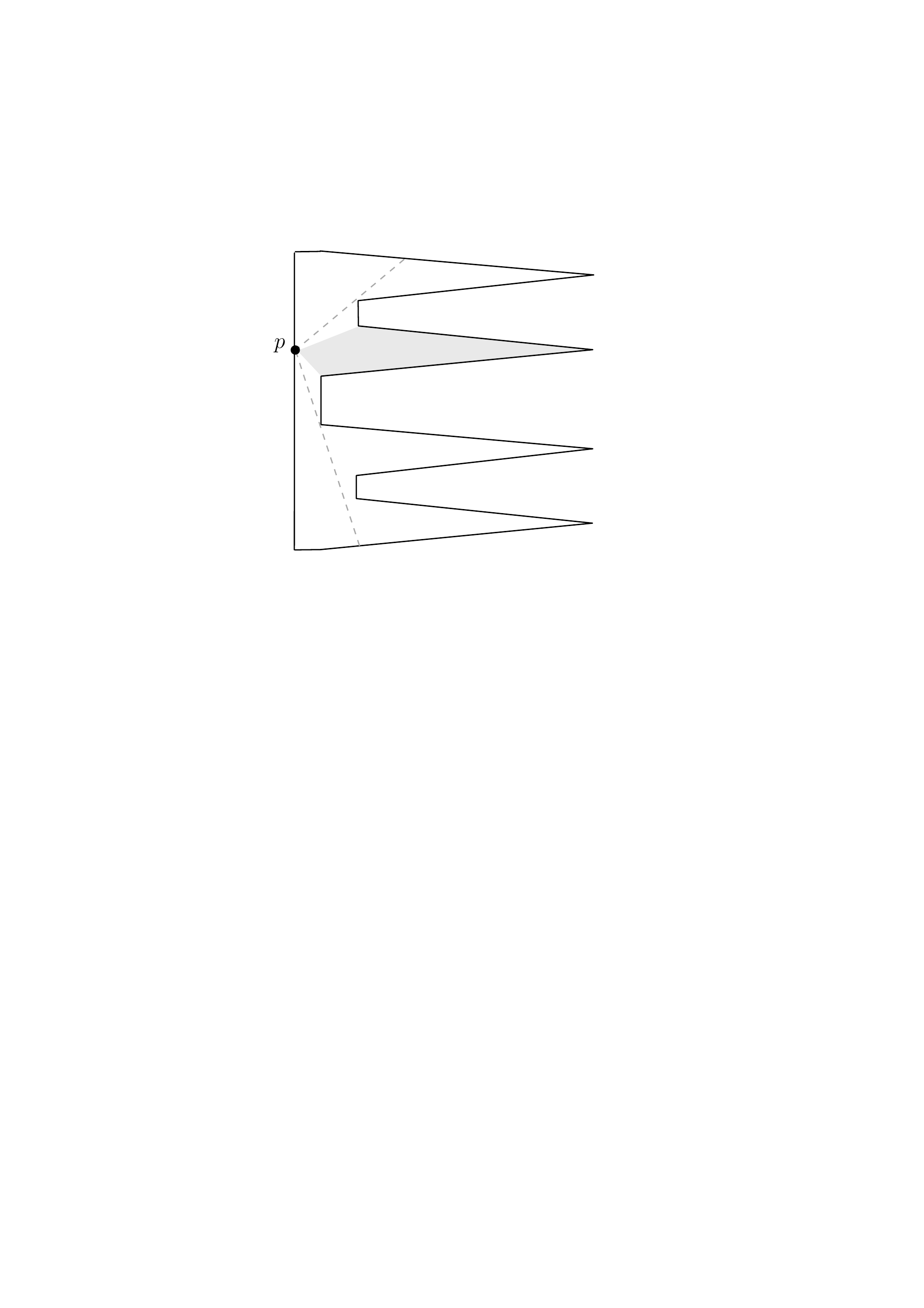}
    \caption{Point $p$ sees interior points of at most one spike of any layer}
    \label{fig:distance_bw_spikes}
\end{figure}

\paragraph*{\Chordvisibilitywidth of the \iteratedComb} 
Clearly, the chord which sees the highest number of reflex vertices is the chord defined by the two left-most vertices. Let this chord be $c$. The number of reflex vertices of the first layer visible from $c$ is two. Similarly, the number of reflex vertices of the $i^{th}$ layer visible from $c$ is $2^i$. Summing up over all $k$ layers, the number of reflex vertices visible from $c$ is $\Theta(2^{k+1})$, and hence $\var{cvw}=\Theta(2^{k+1})$.

\paragraph*{\Pointvisibilitywidth of the \iteratedComb}
\begin{claim}
Chord $c$ contains at least one of the points in $P$ which see the highest number of reflex vertices of $P$.
\end{claim}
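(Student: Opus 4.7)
The plan is to exploit the hierarchical tree structure of the spikes in the \iteratedComb. Let $R(q)$ denote the set of reflex vertices of $P$ visible from $q$, and fix some $q^* \in P$ achieving $|R(q^*)| = \var{pvw}(P)$; the goal is to exhibit $q' \in c$ with $R(q^*) \subseteq R(q')$, from which the claim follows.

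I would first introduce, for every reflex vertex $r$ of $P$, the witness set $W(r) := \{p \in c : p \text{ sees } r\}$, which is a nonempty subsegment of $c$ because $c$ sees the whole polygon. The main structural claim I would aim for is a nesting lemma: whenever a spike $S'$ is a descendant of a spike $S$ in the layer tree, every reflex vertex of $S'$ has its witness set contained in the witness sets of the reflex vertices bounding the opening of $S$. The geometric intuition is that any sightline from a point of $c$ to an interior vertex of $S'$ must first traverse the opening of $S$, so it simultaneously witnesses visibility to the reflex vertices bordering that opening.

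Granted the nesting lemma, the argument finishes cleanly. By the one-spike-per-layer property illustrated in Figure~\ref{fig:distance_bw_spikes} (applied to the arbitrary witness $q^*$, which is valid because the bridge-stretching is strong enough to separate sibling spikes from any vantage point inside $P$, not only from $c$), the spikes containing the reflex vertices in $R(q^*)$ form a single root-to-leaf chain in the layer tree. Hence by the nesting, $\bigcap_{r \in R(q^*)} W(r)$ equals $W(r_{\text{deep}})$ for a deepest reflex vertex $r_{\text{deep}} \in R(q^*)$. Any $q' \in W(r_{\text{deep}})$ then sees every $r \in R(q^*)$, and lies on $c$, which is exactly what the claim asks for.

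The hard part will be making the nesting lemma precise, since it depends on the exact placement of reflex vertices on the spikes and on the distortions induced by the bridge-stretching. I would proceed by induction on the layer depth: any sightline from $c$ that reaches into a layer-$\ell$ spike must first cross the aperture of the enclosing layer-$(\ell-1)$ spike, and the reflex vertices bounding that aperture are exactly those whose witness sets the nesting must cover. Subcases may arise depending on whether the sightline enters an aperture near its top or bottom boundary reflex vertex, but in each case the relevant bounding reflex vertex is forced to be visible from the same witness point on $c$, completing the induction.
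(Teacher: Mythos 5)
Your proposal takes a genuinely different route from the paper, and as written it has real gaps. The most serious one is the step where you apply the one-spike-per-layer separation property to the arbitrary maximizer $q^*\in P$. The bridge-stretching in the construction only guarantees that property for points \emph{on the chord} $c$; your parenthetical assertion that it holds ``from any vantage point inside $P$'' is exactly what needs proof, and it is far from obvious: an interior point sitting near the branching of a spike is much closer to the apertures of both children than any point of $c$ is, so it is not clear it cannot peer into two sibling spikes. In fact, the only apparent way to establish interior separation is to show that every interior point sees a subset of what some point of $c$ sees --- which is the claim itself --- so resting the argument on that assertion is essentially circular. A second gap is the final step $\bigcap_{r\in R(q^*)}W(r)=W(r_{\mathrm{deep}})$: your nesting lemma orders witness sets only across ancestor/descendant spikes, not between the two reflex vertices lying in the interior of the \emph{same} deepest spike, whose witness sets on $c$ need not be nested and whose intersection still has to be shown nonempty. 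Finally, the nesting lemma itself, which carries all the geometric content, is deferred as ``the hard part,'' so the core of the argument is not actually supplied.

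The paper's proof avoids all of this with a short projection argument that uses no spike combinatorics at all: let $q$ be a point seeing the maximum number of reflex vertices and let $p$ be the point of $c$ with the same $y$-coordinate as $q$. Since $P$ is monotone and $c$ is the entire left side of the polygon, for every reflex vertex $r$ visible from $q$ the triangle $pqr$ is empty, hence $p$ sees $r$ as well; so $p\in c$ sees at least as many reflex vertices as $q$. Note that this also \emph{yields} the interior separation property you wanted to assume, rather than assuming it. I would either adopt that argument outright, or, if you want to keep the tree-based approach, first prove the projection lemma --- at which point the rest of your machinery becomes unnecessary.
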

\begin{proof}
Let $q$ be a point in polygon $P$ which sees the highest number of reflex vertices of $P$. Let $p$ be a point on chord $c$ which has the same $y$-coordinate as $q$. Assume $p \neq q$. Let $r$ be a reflex vertex visible from $q$. Since $P$ is monotone with respect to y-axis, the triangle $pqr$ must be empty. This implies that $r$ is visible from $p$ as well. 
Hence, the point $p$ also sees the highest number of reflex vertices in $P$ since $p$ sees at least as much as $q$. 
Refer to Figure~\ref{fig:point_on_chord_sees_more} for an illustration.
\end{proof}

\begin{figure}[h]
    \centering
    \includegraphics[page=2]{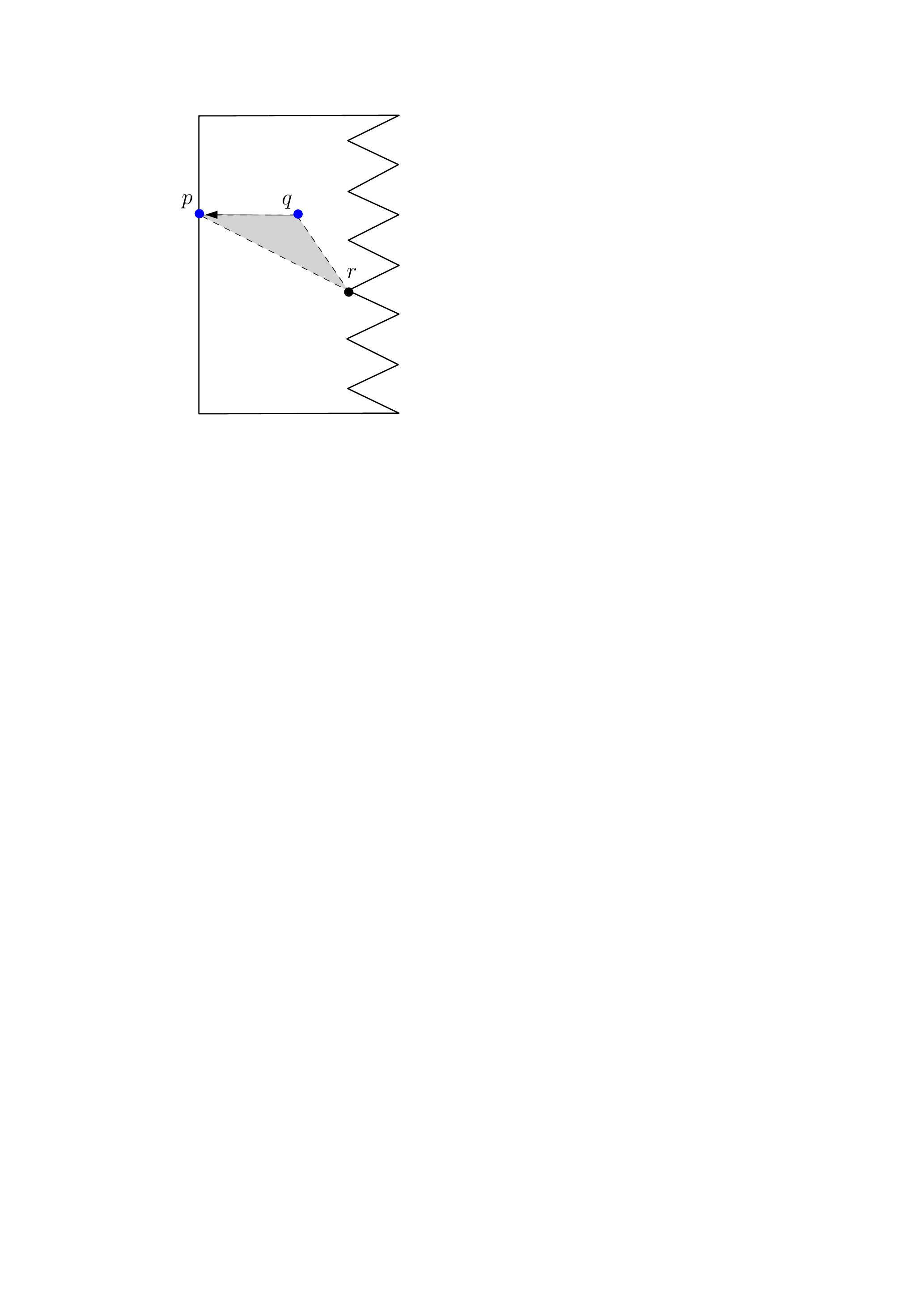}
    \caption{Point $p$ sees all the reflex vertices visible from $q$}
    \label{fig:point_on_chord_sees_more}
\end{figure}

Without loss of generality, assume the point with highest visibility is the topmost point on $c$, denoted by $p$. Both the reflex vertices in layer one are visible from $p$. In each subsequent layer, $p$ can see the reflex vertices that are in the interior of the first spike, which is two reflex vertices, $p$ cannot see any of the other reflex vertices in the other spikes by construction. Summing it up, we can conclude that $2k$ reflex vertices are visible from $p$, and thus $\var{pvw}=2k$.
Hence the \iteratedComb has $\var{cvw} \geq 2^{\Omega(\var{pvw})}$.

\subsection{Upper bound}
\label{sec:upperbound}
Next, we show that 
we can upper bound the \chordvisibilitywidth in terms of the \pointvisibilitywidth.

To this end, we prove the following lemma.
\begin{lemma}
    \label{lem:Point-Chord-UpperBound}
    For every simple polygon, it holds that
    \[\var{cvw} \leq \var{pvw}^{O(\var{pvw})}.\]
\end{lemma}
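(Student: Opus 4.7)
The plan is to bound $\var{cvw}$ by a recursion on $(P, c)$. Given a chord $c$ in $P$, I pick any point $p \in c$; by the definition of \pointvisibilitywidth, $p$ sees at most $w := \var{pvw}$ reflex vertices $r_1, \dots, r_k$ with $k \leq w$. Each $r_i$ defines a \emph{window} $\omega_i$ of the visibility polygon $\vis(p)$, namely the chord of $P$ along the line $pr_i$ extending from $r_i$ away from $p$ until hitting $\partial P$. These windows cut $P \setminus \vis(p)$ into pockets $P_1, \dots, P_k$, where $P_i$ is bounded by $\omega_i$ and a portion of $\partial P$.

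Next I would establish two key facts. First, every reflex vertex $r$ of $P$ visible from $c$ but not from $p$ lies in a unique pocket $P_i$ and is visible from $\omega_i$ viewed as a chord of $P_i$: if $r$ is visible from $p' \in c$, the segment $\overline{p'r}$ crosses $\omega_i$ at some point $q$, which then sees $r$. Second, $\var{pvw}(P_i) \leq w$, because the reflex vertices of $P_i$ form a subset of those of $P$ (in particular, $r_i$ and the new endpoint of $\omega_i$ on $\partial P$ become convex in $P_i$) and visibility inside $P_i$ is more restrictive than visibility in $P$. Writing $g(P,c)$ for the number of reflex vertices of $P$ visible from $c$, these two facts yield the recursion
\[
g(P, c) \;\leq\; w + \sum_{i=1}^{k} g(P_i, \omega_i),
\]
with branching factor at most $w$.

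The heart of the proof---and the step I expect to be the main obstacle---is to bound the depth of this recursion by $O(w)$. My plan is to show that along any root-to-leaf path of the recursion tree, the sequence of base reflex vertices $r^{(0)}, r^{(1)}, \dots, r^{(d-1)}$ through whose windows we descend is simultaneously visible from a single witness point $q \in P$. By the definition of \pointvisibilitywidth this would force $d \leq w$. To construct such a $q$, I intend to exploit the collinearity structure forced by successive windows---each $\omega^{(i)}$ lies on the line through $p^{(i)}$ and $r^{(i)}$, and $p^{(i+1)}$ lies on $\omega^{(i)}$---together with a careful choice of $p^{(i+1)}$, for instance sliding it close to $r^{(i)}$ on the pocket side so that it inherits most of $r^{(i)}$'s visibility into $P^{(i+1)}$. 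The geometric crux is chaining these local visibility relations into a \emph{single} point that sees all of $r^{(0)}, \dots, r^{(d-1)}$ at once, rather than only adjacent pairs. Combining branching $\leq w$ with depth $O(w)$ then yields $\var{cvw} \leq w^{O(w)} = \var{pvw}^{O(\var{pvw})}$.
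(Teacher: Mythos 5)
Your pocket decomposition, the recursion $g(P,c)\le w+\sum_i g(P_i,\omega_i)$, and the branching bound of $w$ are all sound, and they roughly parallel the degree bounds in the paper's argument. The genuine gap is exactly where you flag it, but it is worse than an unproven step: for the recursion as you have set it up, the depth is \emph{not} $O(w)$, and the single-witness claim you hope to prove is false. The culprit is that $g(P_i,\omega_i)$ counts reflex vertices of $P_i$ visible from the \emph{window} $\omega_i$, which may include many vertices that are not visible from the original chord $c$; the recursion then keeps chasing these. A spiral-shaped polygon (a winding channel of constant width) makes this concrete: there $\var{pvw}=O(1)$ and a chord at the mouth sees only $O(1)$ reflex vertices, but each window sees a few more corners one turn deeper, so your recursion unrolls all $\Theta(n)$ turns; the base vertices $r^{(0)},\dots,r^{(d-1)}$ along that root-to-leaf path lie on different turns of the spiral and are not all visible from any single point, and the bound your recursion certifies is only $\Theta(n)$ rather than $w^{O(w)}$. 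Restricting the recursion to vertices visible from $c$ would repair the over-counting, but then the subproblem is no longer of the same form ($c$ is not a chord of $P_i$), so the depth argument still does not close as stated.

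What the paper does differently, and what you would need to import, is to keep every reflex vertex anchored to the \emph{original} chord $s=\seg(a,b)$: each reflex vertex $r$ visible from $s$ sees an interval $I(r)\subseteq s$, each interval is restricted by exactly two elements of $R$, and the key geometric fact is that if $u$ restricts $v$ then $I(v)\subseteq I(u)$. Along any chain of restrictions the intervals are therefore nested inside one and the same segment, so a single point of the innermost interval sees every reflex vertex of the chain, and $\var{pvw}\le k$ caps the chain length at $k+1$; combining this with in-degree at most $k-1$ gives the $k^{O(k)}$ count. That nesting on a common segment is precisely the structure your window-based recursion discards when it passes from $c$ to $\omega_i$ to $\omega_i$'s sub-windows, and it is the missing idea rather than a technical detail: without it there is no common witness point, as the spiral shows.
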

The rest of this paragraph is dedicated to the proof of Lemma~\ref{lem:Point-Chord-UpperBound}.

For that purpose assume, we are given  a simple polygon $P$ together with a chord $s \subset P$.
Furthermore, we assume that no point in $P$ sees more than $k = \var{pvw}$ reflex vertices of $P$.
Let us denote by $R$ the set of all reflex vertices that see at least one point of
$s  = \seg(a,b)$. 
Furthermore, we also include the two endpoints of $s$ in the set $R$.
As $P$ is a simple polygon it holds that every reflex vertex $r\in R\setminus\{a,b\}$ sees a subsegment $I(r)\subseteq s$.
For convenience, we also call~$I(r)$ an \emph{interval}.

\begin{figure}
    \centering
    \includegraphics{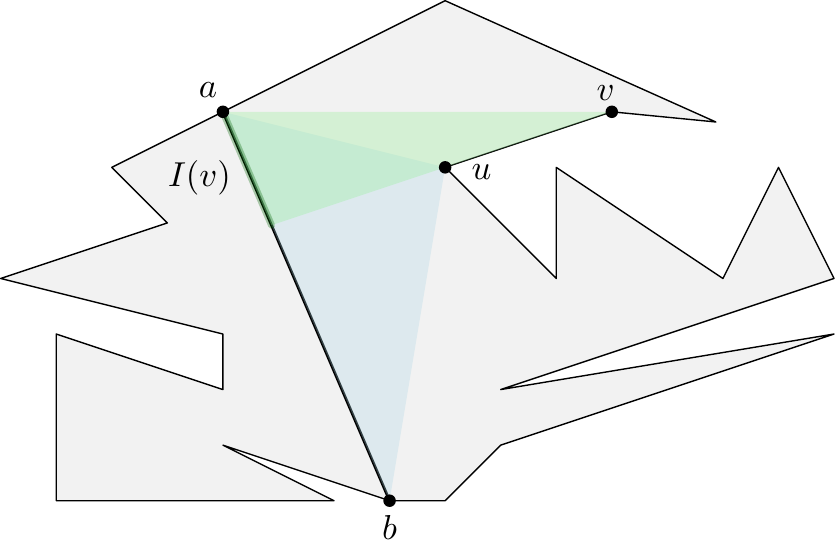}
    \caption{The vertex $v$ sees a subinterval~$I(v)\subseteq s$ which is restricted by $a$ and $u$.}
    \label{fig:sub-interval}
\end{figure}

Note that every interval is \emph{restricted} by exactly two points in $R$, see Figure~\ref{fig:sub-interval}.
In case of ambiguity, due to collinearities, we say the point in $R$ closer to $s$ is the restricting point.
Those vertices can be either the endpoints of $s$ ($a$ and $b$)
or a different reflex vertex in $R$.
We show the following claim.
\begin{claim}
    If $u$ is a reflex vertex that restricts the reflex vertex $v$ then it holds that \[I(v) \subseteq I(u).\]
\end{claim}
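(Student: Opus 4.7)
My plan is to exploit the convexity of $v$'s visibility triangle over $s$. I would write $I(v) = [p_1, p_2]$ with $p_1, p_2 \in s$, and let $T$ denote the closed triangle with vertices $v$, $p_1$, $p_2$. My first step is to show that $T \subseteq P$: for any $q \in T$, the ray from $v$ through $q$ meets $[p_1, p_2] \subseteq I(v)$ in some point $p$; since $v$ sees $p$, the segment $\seg(v, p)$ lies in $P$, and therefore so does $q$.

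Next, I would locate $u$ geometrically. Since $u$ restricts $v$, it is one of the two polygon vertices that block $v$'s view of $s$ beyond the endpoints of $I(v)$. Without loss of generality assume $u$ is responsible for the endpoint $p_1$. Then $v$, $u$, $p_1$ are collinear, and $u$ must in fact sit between $v$ and $p_1$ --- otherwise $u$ would lie on the far side of $s$ from $v$ and could not obstruct the view. The paper's tie-breaking convention (restricting points are those closer to $s$) handles the degenerate collinear case. In particular, $u \in \seg(v, p_1)$ lies on the boundary of $T$.

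With $T \subseteq P$ and $u$ on the boundary of $T$, the conclusion falls out by convexity: for any $p \in I(v) \subseteq [p_1, p_2]$, both $u$ and $p$ lie on the boundary of the convex set $T$, so $\seg(u, p) \subseteq T \subseteq P$. Hence $u$ sees $p$, i.e.\ $p \in I(u)$. Since $p$ was arbitrary, $I(v) \subseteq I(u)$.

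The main obstacle, such as it is, is not the geometry but carefully justifying that $u$ lies on the subsegment $\seg(v, p_1)$ rather than on its extension past $p_1$. This is really just what it means for $u$ to be the restricting vertex on that side, but it is the one step where an incorrect reading of the setup would break the argument, so it deserves an explicit sentence in the write-up.
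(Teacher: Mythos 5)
Your proof is correct and follows essentially the same route as the paper's: both arguments rest on the triangle $T$ spanned by $v$ and $I(v)$ being convex and contained in $P$, with $u$ on its boundary, so that $u$ sees all of $I(v)$ by convexity. You simply spell out two steps the paper leaves implicit --- that $T \subseteq P$ and that $u$ lies on $\seg(v, p_1)$ rather than beyond it --- which is a reasonable elaboration, not a different approach.
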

\begin{proof}
    The triangle $T$ formed by $I(v)$ and $v$ is trivially convex
    and fully contained inside $P$.
    The reflex vertex $u$ is on the boundary of the triangle
    and thus sees every point of $T$.
    In particular also $I(v)$.
\end{proof}

Given the previous claim, we construct the visibility restriction graph $G$ as follows.
The vertices are formed by the points in $R$.
We say that $uv$ forms a directed edge, if $u$ is restricted by $v$.
We summarize a few useful properties of~$G$ in the following claim.
\begin{claim}
    The visibility restriction graph of a polygon with \pointvisibilitywidth at most $k$ has the following properties.
    \begin{enumerate}
        \item 
        \label{itm:TwoSinks}
        The segment endpoints $a,b$ are the only two sinks.
        \item 
        \label{itm:OutDegree}
        The out-degree is two for every vertex $v\in R\setminus \{a,b\}$.
        \item 
        \label{itm:Indegree}
        The in-degree is at most $k-1$ for every vertex $v\in R$.
        \item 
        \label{itm:LongPaths}
        The longest path has at most $k+1$ vertices.
    \end{enumerate}
\end{claim}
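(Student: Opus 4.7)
The plan is to establish the four items in the order stated, leaning on the previously proved nesting $I(v) \subseteq I(u)$ (whenever $v$ is restricted by $u$) together with the definition of \pointvisibilitywidth.

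For the first two items, I would argue directly from the definition of a restricting point. Every reflex vertex $v \in R \setminus \{a,b\}$ sees a proper subsegment $I(v) \subseteq s$ whose two endpoints are, by definition, the two restricting points of $v$; this yields out-degree exactly two. The endpoints $a, b$ are not assigned such a subinterval and therefore have out-degree zero; together with the previous observation this makes them the only sinks.

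The third item is where the \pointvisibilitywidth bound first enters. If $v$ restricts $u$, then $v$ lies on the segment from $u$ to an endpoint of $I(u)$, so $u$ is visible from $v$. Consequently every in-neighbor of $v$ is seen from $v$. For $v \in R \setminus \{a,b\}$ the vertex $v$ is itself reflex, so the point $v$ simultaneously witnesses all its in-neighbors \emph{and} $v$ itself, giving an in-degree of at most $k-1$. The boundary cases $v \in \{a,b\}$ require a short separate argument; here I would perturb the witness point slightly into the interior of $P$ near $v$ so that one additional reflex vertex becomes visible, re-establishing the $+1$ needed to reach the sharper bound $k-1$.

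For the last item, I would chain the nesting along any directed path $v_0 \to \dots \to v_\ell$ to obtain $I(v_0) \subseteq \dots \subseteq I(v_{\ell-1})$. Any point $p \in I(v_0)$ then lies in each of these intervals and therefore sees each of the vertices $v_0,\dots,v_{\ell-1}$. Since the restriction graph is a DAG whose only sinks are $a$ and $b$, every longest path ends in a sink, so the $\ell$ preceding vertices are all reflex and the \pointvisibilitywidth bound forces $\ell \le k$; hence the path contains at most $k+1$ vertices.

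I expect the third item to be the principal obstacle, as it is the first place where one must be careful about whether a reflex vertex counts as ``seen'' from itself and whether the endpoints $a, b$ really achieve the sharper bound $k-1$ rather than only the easier $k$. A small perturbation argument, placing the witness point marginally inside $P$ near the vertex under consideration, should settle both subtleties uniformly.
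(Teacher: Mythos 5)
Your proposal follows the same route as the paper for all four items: sinks and out-degree read off directly from the definition of restricting points, the in-degree bound from the observation that a reflex vertex sees itself together with all of its in-neighbours, and the path bound from chaining the interval nesting $I(v_0)\subseteq\cdots\subseteq I(v_{\ell-1})$ and applying the \pointvisibilitywidth bound to a point in the innermost interval. The one place you go beyond the paper is the boundary case $v\in\{a,b\}$ of Item~3, and there your perturbation argument does not work as stated: moving the witness point slightly into the interior of $P$ near $a$ need not bring any additional reflex vertex into view (for instance if $a$ lies in the middle of a long edge far from every reflex vertex), so for the two endpoints you only obtain in-degree at most $k$ rather than $k-1$. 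The paper silently skips this case too, and the weaker bound is harmless for what follows, since the layer-counting claim only charges a factor of $k$ per layer anyway.
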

\begin{proof}
    By definition, every reflex vertex is restricted by exactly two vertices in $R$. This implies Item~\ref{itm:TwoSinks} and~\ref{itm:OutDegree}.
    
    Any reflex vertex $v$ can see itself and all its neighbors.
    Its in-degree neighbors are also reflex vertices.
    As no point can see more than $k$ reflex vertices $v$ has at most $k-1$ in-degree neighbors.
    This concludes the proof of Item~\ref{itm:Indegree}.
    
    Finally, to prove Item~\ref{itm:LongPaths}, let $p = u_1u_2\ldots u_l$ be a directed path.
    Then it holds that there is a point  
    \[q\in I(u_l) \subseteq \ldots \subseteq I(u_2) \subseteq I(u_1)=s.\]
    The point~$q$ sees all reflex vertices of the path~$p$.
    As no point sees more than $k$ reflex vertices, it holds that $p$ has at most $k$ reflex vertices. 
    As all but potentially the first vertex is a reflex vertex, we have $l\leq k+1$.
\end{proof}

The properties of the last claim enable us to give an upper bound on the size of $G$ and thus also on the size of $R$.
\begin{claim}
    The visibility restriction graph~$G$ of a polygon with \pointvisibilitywidth $\var{pvw} = k$ has at most $k^{O(k)}$ vertices.
\end{claim}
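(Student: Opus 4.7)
The plan is to reverse the direction of the edges in $G$ and perform a breadth-first search from the two sinks $a,b$. In the reversed graph~$G^R$, the sinks become sources, every in-degree bound from $G$ becomes an out-degree bound, and path lengths are preserved. So $G^R$ has exactly two sources, out-degree at most $k-1$ (by Item~\ref{itm:Indegree}), and longest path of at most $k+1$ vertices (by Item~\ref{itm:LongPaths}).

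First I would argue that every vertex of $G$ is reachable from $\{a,b\}$ in $G^R$. Since every non-sink vertex of $G$ has out-degree $2$ (Item~\ref{itm:OutDegree}) and $G$ has no directed cycles (any cycle would violate the bounded path length of Item~\ref{itm:LongPaths}), starting from any vertex $v\in R$ and following out-edges in $G$ must eventually reach a sink. Reversing this path gives a directed path in $G^R$ from a source to $v$. Thus a BFS in $G^R$ started at the source set $\{a,b\}$ visits every vertex of $G$.

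Next I would bound the number of vertices at BFS depth $i$. Each vertex at depth $i{+}1$ must be an out-neighbor in $G^R$ of some vertex at depth~$i$, and the out-degree in $G^R$ is at most $k-1$. Hence the number of vertices at depth~$i$ is at most $2(k-1)^i$. Since the longest path in $G^R$ has at most $k+1$ vertices, the BFS depth is at most $k$, and summing gives
\[
|R|\;\leq\;\sum_{i=0}^{k} 2(k-1)^i \;=\; O\!\bigl(k^{k}\bigr) \;=\; k^{O(k)}.
\]

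I do not expect a serious obstacle: the argument is essentially a tree-overcount of a shallow, low-degree BFS tree, and all needed properties are supplied by the preceding claim. The only subtle point to get right is justifying that all vertices of $G$ are reached by this BFS, which relies on combining out-degree~$2$ with the absence of long directed paths to rule out infinite chains and cycles. Once that is in place, the stated bound follows by a one-line geometric sum.
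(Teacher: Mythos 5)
Your proposal is correct and is essentially the paper's own argument: the paper likewise organizes $G$ into layers by distance from $a$ and $b$, bounds the growth of each layer by the in-degree bound, caps the number of layers by the longest-path bound, and sums the resulting geometric series to get $k^{O(k)}$. The only difference is that you additionally spell out why every vertex is reachable from $\{a,b\}$ in the reversed graph, a point the paper leaves implicit.
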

\begin{proof}
    We organize~$G$ into layers depending on the distance from~$a$~and~$b$. 
    Note that if layer~$i$ has~$t$ vertices then layer~$(i+1)$ has at most $t\cdot k$ vertices. As there are at most $k+1$ layers and the first layer has size two we get that $G$ has at most
    \[2 + 2k + 2k^2 + 2k^3 + \ldots + 2k^{k} = k^{O(k)}\] vertices.
\end{proof}

\section{Conclusion}
\label{sec:conclusion}
We believe that local complexity has the potential to be a useful  parameter.
We gave two ways to define local complexity in a rigorous way and 
showed how those two ways relate to one another.
We want to end with a few open questions.
\begin{enumerate}
    \item Can we find algorithms and data structures that can make use of low local complexity? 
    \item Can we compute or approximate the \pointvisibilitywidth and \chordvisibilitywidth in an efficient manner?
    Note that this is more a theoretical question.
    We do not necessarily need to know the \chordvisibilitywidth of a polygon to use the concept in the design and analysis of an algorithm.
    \item Are there other ways to formalize the idea of 
    low local complexity within a polygonal region?
\end{enumerate}

\bibliography{LC-library}

\begin{thebibliography}{1}

\bibitem{agrawal2020parameterized}
Akanksha Agrawal and Meirav Zehavi.
\newblock Parameterized analysis of art gallery and terrain guarding.
\newblock In {\em International Computer Science Symposium in Russia}, volume
  12159 of {\em LNCS}, pages 16--29. Springer, 2020.
\newblock \href {https://doi.org/10.1007/978-3-030-50026-9_2}
  {\path{doi:10.1007/978-3-030-50026-9_2}}.

\bibitem{EFRAT2006238}
Alon Efrat and Sariel Har-Peled.
\newblock Guarding galleries and terrains.
\newblock {\em Information Processing Letters}, 100(6):238--245, 2006.
\newblock \href {https://doi.org/https://doi.org/10.1016/j.ipl.2006.05.014}
  {\path{doi:https://doi.org/10.1016/j.ipl.2006.05.014}}.

\bibitem{SimonARTGALLERY}
Simon Hengeveld and Tillmann Miltzow.
\newblock A practical algorithm with performance guarantees for the
  art{\textasciitilde}gallery problem.
\newblock {\em CoRR}, abs/2007.06920, 2020.
\newblock \href {http://arxiv.org/abs/2007.06920} {\path{arXiv:2007.06920}}.

\bibitem{kirkpatrick1986ultimate}
David~G. Kirkpatrick and Raimund Seidel.
\newblock The ultimate planar convex hull algorithm?
\newblock {\em SIAM Journal on Computing}, 15(1):287--299, 1986.
\newblock \href {https://doi.org/10.1137/0215021} {\path{doi:10.1137/0215021}}.

\end{thebibliography}
\end{document}